\newtheorem{lemma}{Lemma}
\newtheorem{corrolary}{Corollary}
\begin{document}
\title{Optimal~Sensing~Policy With Interference-Model Uncertainty}
	\author{
		\IEEEauthorblockN{
			Vincent Corlay, Jean-Christophe Sibel, and Nicolas Gresset
		}
		\thanks{
			The authors are with Mitsubishi Electric Research and Development Centre Europe, 35700 Rennes, France (e-mail: \{v.corlay, j.sibel, n.gresset\}@fr.merce.mee.com).
		}
	}
	\maketitle
	
\linespread{0.95}

\setlength{\abovedisplayskip}{4pt}
\setlength{\belowdisplayskip}{4pt}

\begin{abstract}
This paper considers a half-duplex scenario where an interferer behaves according to a parametric model but the values of the model parameters are unknown. We explore the necessary number of sensing steps to gather sufficient knowledge about the interferer's behavior.
With more sensing steps, the reliability of the model-parameter estimates is improved, thereby enabling more effective link adaptation.
However, in each time slot, the communication system experiencing interference must choose between sensing and communication.
Thus, we propose to investigate the optimal policy for maximizing the expected sum communication data rate over a finite-time communication.
This approach contrasts with most studies on interference management in the literature, which assume that the parameters of the interference model are perfectly known.
We begin by showing that the problem under consideration can be modeled within the framework of a Markov decision process (MDP). 
Following this, we demonstrate that both the optimal open-loop and optimal closed-loop policies can be determined with reduced computational complexity compared to the standard backward-induction algorithm. 
\end{abstract}

\begin{IEEEkeywords}
Model uncertainty, sensing, MDP, link adaptation, superposition coding.
\end{IEEEkeywords}

\vspace{-3mm}
\section{Introduction}

Interference-mitigation techniques have been extensively studied in the literature with a primary focus on predicting the availability of communication resources to optimize their utilization.
This approach is relevant to both standard interference management and cognitive radio. 
In the context of cognitive radio, the secondary user seeks to avoid the primary user of a communication medium. 
Although the applications of these two fields are slightly different, the underlying methodologies remain similar.

Interference-mitigation methods generally involve several key steps: spectrum sensing of the communication channel, interference prediction or primary-user behavior prediction for cognitive radio cases, spectrum-access decision, and link adaptation.
For example, the optimal sensing duration to optimize the communication data rate is investigated in \cite{Liang2008} in the scope of cognitive radio. 
Similarly, \cite{Xing2014} investigates optimal sensing duration, but incorporates an interference model. More specifically, in \cite{Xing2014} sensing is conducted with few occurrences. The communication data rate is optimized as a function of the expected number of missed transmission slots or slots experiencing interference.
Regarding interference prediction and primary-user detection, Markov models are frequently used \cite{Xing2014}-\cite{Noh2010}, where the interferer activation probability at any time $t$ depends on the interferer state at time $t-1$. In many models, the interferer state is either active or inactive.
Recently, reinforcement learning (RL) approaches have also been applied for this class of problems. For example, \cite{Evagoras2020} proposes a RL framework in which an agent collects observations from the environment, e.g., a user equipment and the network, and learns proper policies to adjust the estimated SINR aiming to maximize a reward function such as a user-equipment data rate. Additional resources on this topic are provided in the survey \cite{RLsurvey}.

Most of these studies assume that the parameters of the interference model are perfectly known and primarily focus on sensing in order to optimize the usage of the interference model.
In the half-duplex case, no analysis is made on the trade-off between: $(i)$ improving the interference-model accuracy through sensing, at the cost of missed communication opportunities, and $(ii)$ proceeding with communication based on the current partial knowledge of the model, which may lead to imperfect link adaptation.

\noindent \textbf{Problem statement.} In this paper, we consider a parametric interference model with no prior knowledge of the interference-model parameters. 
The following question is investigated:
\textit{Given a communication duration of $T$ time slots, should one allocate any given time slot to sensing or to communication in order to maximize the expected sum communication data rate?}
This applies for instance to scenarios with dynamic interference where the hidden interference parameters change every $T$ time slots.


\noindent \textbf{Structure of the paper and contributions.}
In Section~\ref{Sec_models}, we introduce the interference model and channel model and characterize the model-parameter probability distribution. 
We then propose adapted link-adaptation strategies, including superposition coding. Section~\ref{sec_simpleModel_sensing} models the problem as a MDP and details both efficient and optimal open-loop and closed-loop policies. Section~\ref{Sec_simu_res} presents simulation results to demonstrate the performance of the proposed algorithms.


\vspace{-1mm}
\section{Interference model, channel model, and link-adaptation strategies}
\label{Sec_models}
\vspace{-1mm}
\subsection{Interference and sensing models}
\noindent \textbf{Interference model.} We consider a scenario where an interferer is active with probability $p$ and not active with probability $1-p$. 
At each time slot, the interferer draws a state ``active" or ``inactive" based on $p$ and remains in that state for one time slot. Then, it switches to a new state for the next time slot.
This behavior is assumed to be independent and identically distributed. \\
\textbf{Sensing model.} 
We consider a communication system that is affected by the interferer activity and is unaware of the parameter~$p$. No prior information on $p$ is assumed, i.e., it follows the uniform distribution in $[0 \ 1]$.
To estimate $p$, the system performs $n$ sensing observations and detects active interference in the channel $k$ times. Thus, the current sensing state is represented by ($k$, $n$). 

\subsection{Interference characterization}

Generating a state $(k,n)$ is equivalent to drawing $n$ samples from a Bernoulli random variable $Y$ with a hidden parameter $p$. The corresponding binomial random variable is $Z=\sum_{i=1}^n Y_i$, where the probability of observing the interference channel on $k$ time slots out of $n$ time slots for a given parameter $p$ is $P(Z=k|n,p)=\binom{n}{k}p^k (1-p)^{n-k}$. 
As $p$ is unknown, it should be treated as a random variable. A state $(k,n)$ can be obtained by different hidden values of $p$. 
Hence, we consider the distribution $P(X=p|k,n)$ where $X$ is the random variable representing the uncertainty in $p$. Applying Bayes law and the uniform \textit{a priori} assumption on $p$, we get $P(X=p|k,n) \propto P(Z=k|p,n)$. 
To fully characterize the probability distribution of $p$, we compute the normalization constant $\int_{0}^1 \binom{n}{k}y^k (1-y)^{n-k} dy$ leading to $1/(n+1)$. Thus, the probability density function of $X$ is given by:
\begin{align}
\begin{split}
f_{k,n}(p) 
=  (n+1)\binom{n}{k}p^k (1-p)^{n-k}. %
\end{split}
\end{align}
This distribution is known as the Beta distribution \cite{BookBeta}, i.e., $X \sim Beta(k+1,n-k+1)$. The expected value is $E[X] = \frac{k+1}{n+2}$.

In this paper, we adopt the following notations $k'=k+1$, $n'=n+1$, $k''=k+2$, and $n''=n+2$.  We also define $p_k^n = \frac{k'}{n''}$ and $\bar{p}_k^n = 1- p_k^n$.
\subsection{Considered channel}

The considered channel equation is $y=x+w+z$, where $x$ is the transmitted symbol of average power $P$, $w$ is the noise of power  $N$, $z$ is the interference of power $I$ when active, and $y$ is the received symbol. To simplify the explanation, we assume that both the noise and the interference follow a Gaussian distribution, as is common in information theory, see e.g., \cite[Sec. 15.1.5]{Cover2005}.
Without active interference, the power of the total noise is noted $N_1=N$, and with active interference, the power of the total noise is $N_2=N + I$. Therefore, the signal to interference and noise ratio (SINR) is $\mathrm{SINR}_i = P/N_i$. Typically, $\mathrm{SINR}_2<<\mathrm{SINR}_1$.   
According to Shannon's Channel Capacity Theorem, the highest achievable communication data rate on this channel is $C(\mathrm{SINR})=(1/2) \log_2(1+\mathrm{SINR})$ [bits].  Thus, the state of the interferer induces two channels. The first channel capacity is $C(\mathrm{SINR}_1)=R_{max}$ and the second channel capacity is $C(\mathrm{SINR}_2)=R_{min}$.
We assume that the entire codeword is transmitted within one time slot, thereby experiencing a single interferer state.
Hence, the optimal communication data rate per time slot is $M \cdot C(\mathrm{SINR})$, where $M$ is the size of the codeword. To simplify the notations, this constant $M$ is omitted in the paper.

\subsection{Link adaptation}

When the communication system decides to initiate a transmission, the communication settings should be selected according to the state of the noise and the state of the interferer. This process is known as link adaptation.
In this paper, we consider the following superposition-coding strategy that employs two codes.  
It is similar to the one used for the broadcast channel \cite{Cover1972}\cite[Sec. 15.1.3]{Cover2005}, refer to Equations (15.11) and (15.12) in \cite{Cover2005} for further details.
While a formal proof of optimality is lacking, no alternative scheme has been identified that achieves a higher expected communication data rate.
On the transmitter side, the sum of the two following codewords (from two distinct codes) is transmitted:
\begin{itemize}
\item A first codeword with a power boosting constant $0\leq \alpha \leq 1$. The communication data rate $R_1$ of the first code is  $R_1(\alpha) = C(\frac{\alpha P}{N_1})$.
\item A second codeword with a boosting constant $1-\alpha$. The communication data rate $R_2$ of the second code is $R_2(\alpha) = C(\frac{(1-\alpha) P}{\alpha P + N_2})$.
\end{itemize}
On the receiver side:
\begin{itemize}
\item The receiver first attempts to decode the second codeword. It is successfully decoded with a very high probability regardless of the interference state.
\item Second, the decoded codeword is subtracted from the received signal.
\item Third, the receiver attempts to decode the first codeword. However, this succeeds only if $\mathrm{SINR} = \mathrm{SINR}_1$ as the code is designed based on the assumption of $N_1$.
If $\mathrm{SINR} = \mathrm{SINR}_2$, $R_1(\alpha)$ far exceeds the channel capacity, i.e.,  $R_1(\alpha) >> C(\alpha P / N_2)$, leading to an error probability close to 1.
\end{itemize}
Thus, in the absence of interference, the communication data rate is the sum of the rates of both codes. When interference is present, the obtained data rate is that of the second code only. 
Given the knowledge of $p$, the expected instantaneous communication data rate is therefore:
\vspace{-1mm}
\begin{align}
\label{equ_link_adapt}
\begin{split}
E_p[R(\alpha,p)] = R_1(\alpha)(1-p) + R_2(\alpha),
\end{split}
\end{align}
where $\alpha$ is the link-adaptation constant. 
A poor estimate of $p$ can lead to a suboptimal choice of $\alpha$, thereby reducing the communication data rate.
In the absence of perfect knowledge of $p$ but with $k,n$ observation values, the expected instantaneous communication data rate becomes:
\vspace{-1mm}
\small
\begin{align}
\label{equ_rate_good}
\begin{split}
E_f[R(\alpha,k,n)] &= \int_{0}^1 f_{k,n}(p) E_p[R(\alpha,p)] dp  = R_2(\alpha) +  R_1(\alpha)\bar{p}_k^n.
\vspace{-2mm}
\end{split}
\end{align}
\normalsize
By setting the derivative of \eqref{equ_rate_good} with respect to $\alpha$ to 0 and bounding the solution within the range $[0 \ 1]$ we derive the optimal value of $\alpha$:
\begin{align}
\alpha_k^n = \min\{\max\{0, (N_2\cdot \bar{p}_k^n - N_1)/(P \cdot  p_k^n)\},1\}.
\end{align}
Then, $E_f[R(\alpha_k^n,k,n)]$ is the expected rate with the value of $\alpha$ that maximizes \eqref{equ_rate_good}. 

The expected data rate without employing superposition coding, i.e., using only one code per transmission, can be expressed as: 
\small
\begin{align}
\begin{split}
E_f[R(\alpha_k^n,k,n)] &= \max_{\alpha} \{E_f[R(\alpha=0,k,n)], E_f[R(\alpha=1,k,n)] \},\\
&= \max\{R_{min}, R_{max}\bar{p}_k^n\}.
\end{split}
\end{align}
\normalsize
This can be interpreted as a constrained superposition-coding case, leading to inferior or equal performance. For instance, with $\mathrm{SINR}_1=15$ dB, $\mathrm{SINR}_2=10$ dB, and known $p=0.32$, the communication data rate gain achieved with superposition coding is approximately $12 \ \%$.

\section{Optimal sensing strategy}
\label{sec_simpleModel_sensing}


As outlined in the introduction, we consider a half-duplex model where the system must choose between sensing and communication at each time slot to maximize the expected sum data rate over $T$ time slots. 
In this section, we explain how to compute the optimal policies to achieve the best balance between sensing and communication.
We simplify the notation of \eqref{equ_rate_good} as $R(\alpha_k^{n},k,n)$ for $E_f[R(\alpha_k^n,k,n)]$. 

\subsection{Modeling as a MDP}

The system can be effectively modeled as a finite-horizon Markov decision process (MDP) \cite[Chap. 4]{Puterman1994}.
The states, actions, rewards, and state-transition probabilities are as follows.
\begin{itemize}
\item States: The set $(k,n,i)$ defines a state $s(k,n,i)$ of the system, where $i$ is the slot index with $0\leq k \leq n \leq i \leq T$. 
The number of states is given by: $\sum_{l=0}^T \sum_{m=0}^l m = O(T^3).$
\item Actions: Sensing ($a=1$) or communication ($a=0$).
\item (short-term) Rewards  (which does not depend on $i$):
\begin{align}
r(k,n,a=1) = 0, \ r(k,n,a=0)=R(\alpha_k^n,k,n).
\end{align}
\item State-transition probabilities: For the action $a=0$: 
\begin{align}
\label{equ_transi_0}
p(s(k,n,i+1)|s(k,n,i),a=0)=1,
\end{align}
 and 0 for all other states. 
For the action $a=1$: 
\begin{align}
\label{equ_transi}
\begin{split}
&p(s(k+1,n+1,i+1)|s(k,n,i),a=1)=p_k^n, \\
&p(s(k,n+1,i+1) | s(k,n,i),a=1)=  \bar{p}_k^n.
\end{split}
\end{align}
\end{itemize}
It is important to note that the transition probability is $p_k^n$ rather than $p$. The purpose of the MDP formulation is to calculate the expected gain averaged over the distribution of $p$.

The backward-induction (BI) algorithm can be employed to optimally solve this MDP \cite[Chap. 4]{Puterman1994}. This algorithm helps determine the optimal trade-off between improving the estimation of $p$ through sensing to increase the expected instantaneous communication data rate $R(\alpha_k^{n},k,n)$ while minimizing the number of missed transmission opportunities. \\
\textbf{Complexity of the standard implementation of BI.} 
Due to the sequential nature of the problem, the BI algorithm is applied on a trellis where at time $i$ there are $\sum_{m=0}^i m =O(i^2)$ states. Each state at time $i$ has 3 connections with the set of states at time $i+1$,  as indicated in \eqref{equ_transi_0} and \eqref{equ_transi}. Consequently, the number of edges in the trellis is given by $\sum_{l=0}^T \sum_{m=0}^l 3 m =O(T^3)$, which is the same order of magnitude as the number of states.

In the subsequent sections, we propose an alternative method to find the optimal strategy. The proposed closed-loop policy represents a more efficient implementation of the BI algorithm. Instead of having $O(i^2)$ states in the trellis at time $i$, we demonstrate that this can be reduced to $O(i)$.

\subsubsection{Review of MDP and notations }
We define the long-term gain starting from time $i$ as $G_i=\sum_{j=i}^T r_j$, where $r_j$ represents the short-term reward obtained at time $j$. The value of a state is defined by the expected gain starting from that state: $V(s(k,n,i))=E[G_i]$. It is equal to the expected sum communication data rate. 
For convenience, we will refer to a value of a state $V(s(k,n,i))$ as $V(k,n,i)$.
Under a specific policy $\pi$, the value is denoted as $V^\pi$  where $\pi^*$ represents the optimal policy.
The notation $d_i=d(k,n,i)=a_i$ is used to denote the action taken in the state $s(k,n,i)$ under the optimal policy $\pi^* = (d_1,...d_T)$. 
Open-loop policies, where the sequence of actions does not depend on the visited states, will also be considered, i.e., $\pi =(a_1,...,a_T)$.

\subsubsection{Similarities with bandit problems}
\label{sub_sec}
The problem we are addressing shares similarities with bandit problems \cite[Chap.2]{Sutton2020} where the optimal balance between exploitation and exploration must be identified. In this context, unlike the MDP framework, the reward does not generally depend on the state (as introduced in \cite[Chap.3]{Sutton2020}).  

The following greedy bandit algorithm (\cite[Chap.2, Sec 2.2]{Sutton2020}) can be considered as a benchmark: With probability $1-\epsilon$ the algorithm selects the optimal action based on the short-term reward, while with the probability $\epsilon$ it chooses a suboptimal action (i.e., sensing in our case). Furthermore, the proposed open-loop policy can be interpreted as a bandit algorithm as the action does not depend on the state.

\subsection{Efficiently finding the optimal open and closed-loop policies}
The first two subsections present intermediary results used to obtain the open-loop policy (third subsection) and the closed-loop policy (fourth subsection).

\subsubsection{Expected communication data rate after $i$ sensing steps (open loop)}

Let us assume that the current time is $i$ and that $k$, $n$ are the current sensing values. 
The variable $k_{i+1}$ denotes the number of interference observations at time $i+1$.
The expected instantaneous rate after one additional sensing step can be expressed as:
\begin{align}
\begin{split}
E_{k_{i+1}}[R(\alpha_k^{n'},k,n')] = &p(k_{i+1}=k'|k,n) R(\alpha_{k'}^{n'},k',n')  +\\
& p(k_{i+1}=k|k,n) R(\alpha_k^{n'},k,n'), 
\end{split}
\end{align}
where $ p(k_{i+1}=k'|k,n) = p_k^n$. 
The expected instantaneous rate after $j$ additional sensing steps is denoted as $E_{k_{i+j}}[R(\alpha_k^{n+j},k,n+j)] = g(k,n,j)$ with:
\begin{align}
\label{expe_rate_sens}
\begin{split}
&g(k,n,j) =  p_k^n  g(k',n',j-1)  +  \bar{p}_k^n  g(k,n',j-1),
\end{split}
\end{align}
where $g(k,n,0) = R(\alpha_k^{n},k,n)$.
Equation \eqref{expe_rate_sens} corresponds to an open-loop mode where sensing is performed independently of the resulting states. 
The following lemma demonstrates that sensing always increases or maintain the expected instantaneous communication data rate.
The difference between \eqref{equ_key_1} and \eqref{equ_key_2} below highlights the advantages of sensing in this context.
\vspace{-1mm}
\begin{lemma} 
\label{lem_increa}
For any $j\geq 0$, $g(k,n,j+1) \geq g(k,n,j)$. 
\end{lemma}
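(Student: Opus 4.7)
The plan is to prove the inequality by induction on $j$, with the base case $j=0$ doing the substantive work and the inductive step being a routine application of the recursion \eqref{expe_rate_sens}.

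For the base case, I would first establish a Bayesian mixture identity for the posterior density, namely
\begin{align*}
f_{k,n}(p) = p_k^n\, f_{k+1,n+1}(p) + \bar{p}_k^n\, f_{k,n+1}(p).
\end{align*}
This is the law of total probability applied to $X$ before observing the next sensing sample: with prior-predictive probability $p_k^n$ the next observation is active and the posterior shifts from $(k,n)$ to $(k+1,n+1)$, and with probability $\bar p_k^n$ the posterior shifts to $(k,n+1)$. I would verify it directly using $p_k^n = (k+1)/(n+2)$ and the elementary identities $(k+1)\binom{n+1}{k+1} = (n+1)\binom{n}{k}$ and $(n-k+1)\binom{n+1}{k} = (n+1)\binom{n}{k}$, after which the factor $p^k(1-p)^{n-k}[p+(1-p)]$ collapses to the claimed density.

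With this identity in hand, write $g(k,n,0) = \max_\alpha \int_0^1 f_{k,n}(p)\, E_p[R(\alpha,p)]\, dp$, substitute the mixture, and use that the maximum of a nonnegative convex combination is dominated by the combination of the individual maxima:
\begin{align*}
g(k,n,0) &\leq p_k^n \max_\alpha \!\int f_{k+1,n+1}(p) E_p[R(\alpha,p)] dp \\
&\quad + \bar p_k^n \max_\alpha \!\int f_{k,n+1}(p) E_p[R(\alpha,p)] dp \\
&= p_k^n R(\alpha_{k+1}^{n+1},k+1,n+1) + \bar p_k^n R(\alpha_k^{n+1},k,n+1) = g(k,n,1).
\end{align*}
Intuitively, this is the standard fact that additional information never hurts when the decision (here, the choice of $\alpha$) can be adapted to that information.

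For the inductive step, suppose $g(k',n',j) \geq g(k',n',j-1)$ for every admissible pair $(k',n')$. Applying \eqref{expe_rate_sens} twice and using nonnegativity of $p_k^n$ and $\bar p_k^n$ gives
\begin{align*}
g(k,n,j+1) &= p_k^n\, g(k+1,n+1,j) + \bar p_k^n\, g(k,n+1,j) \\
&\geq p_k^n\, g(k+1,n+1,j-1) + \bar p_k^n\, g(k,n+1,j-1) \\
&= g(k,n,j),
\end{align*}
which closes the induction. The main obstacle is the base case, and within it the mixture identity; once that identity is in place, the rest is a short manipulation based on a convexity-of-maxima argument and a one-line induction.
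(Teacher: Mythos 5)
Your proof is correct and follows essentially the same route as the paper: the base case rests on the fact that re-optimizing $\alpha$ after the extra observation can only help, and higher $j$ follows by pushing the inequality through the recursion \eqref{expe_rate_sens}. You are in fact more complete than the paper on one point: the posterior mixture identity $f_{k,n}(p) = p_k^n f_{k+1,n+1}(p) + \bar{p}_k^n f_{k,n+1}(p)$, which you verify explicitly, is exactly what justifies the paper's unproven claim that the no-sensing expression \eqref{equ_key_1} equals $R(\alpha_k^n,k,n)=g(k,n,0)$.
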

\vspace{-3mm}
\begin{proof}
Consider $j=0$. If no sensing is performed:
\begin{align}
\label{equ_key_1}
\begin{split}
&E_{k_{i+1}}[R(\alpha_k^n,k,n)] = p_k^n   R(\alpha_k^{n},k',n') + \bar{p}^n_k  R(\alpha_k^{n},k,n' ),
\end{split}
\end{align}
indicating that the link-adaptation variable $\alpha$ is determined based on the ``old" $k$ and $n$ values. If sensing is performed:
\begin{align}
\label{equ_key_2}
E_{k_{i+1}}[R(\alpha_k^n,k,n)]  = p_k^n  R(\alpha_{k'}^{n'},k',n')+ \bar{p}_k^n   R(\alpha_{k}^{n'},k,n'),
\end{align}
where $R(\alpha_k^n, k,n)\geq R(\alpha^l_m,k,n)$ for any $l,m$ as, by definition, $\alpha_k^n$ is the best  link-adaptation value given the $k$, $n$ sensing values. 
This reasoning can be applied similarly for higher values of $j$.
\end{proof}
Computing \eqref{expe_rate_sens} using a recursive approach leads to an exponential complexity $O(2^j)$.
However, a more efficient method can be used.
Equation \eqref{expe_rate_sens} can be expressed as follows:
\begin{align}
\label{equ_simpli}
\begin{split}
& p_k^n  g(k',n',j-1)  +  \bar{p}_k^n  g(k,n',j-1) \\
&= p_k^n  (p_{k'}^{n'} g(k'',n'',j-2) + \bar{p}_{k'}^{n'} g(k',n'',j-2) ) \\
&+ \bar{p}_k^n   (p_{k}^{n'} g(k',n'',j-2) + \bar{p}_{k}^{n'} g(k,n'',j-2) ) \\
&= p_k^n  p_{k'}^{n'} g(k'',n'',j-2) +  \\
&(p_k^n \bar{p}_{k'}^{n'} + \bar{p}_k^n  p_{k}^{n'})g(k',n'',j-2) +\bar{p}_k^n \bar{p}_{k}^{n'} g(k,n'',j-2).
\end{split}
\end{align}
where $g(k',n'',j-2)$ needs to be computed only once rather than twice as required in the recursive approach.
More generally, at level $j-m$ there are only $m+1$ distinct functions $g$ to evaluate after the merging step (corresponding to the last row of \eqref{equ_simpli}): At this level the value of $k$ ranges between $k$ and $k+m$.
Algorithm~\ref{algo_eff} outlines an efficient algorithm to compute $g(k,n,j)$ with a complexity of $O(\sum_{l=0}^j l) = O(j^2)$. 
\begin{algorithm}
\caption{Computing the expected instantaneous communication data rate after $j$ open-loop sensing steps $E_{k_{i+j}}[R(\alpha_k^{n+j},k,n+j)]=g(k,n,j)$}
\begin{algorithmic}[1]
\label{algo_eff}
\STATE $p^{vec}$ = getProbaVec($k,n,j$)
\STATE $g(k,n,j) = \sum_{k_l=k}^{k+j} p^{vec}_{k_l} R(\alpha_{k_l}^{n+j},k_l,n+j)$
\end{algorithmic}
\vspace{3mm}
\begin{algorithmic}[1]
\STATE \textbf{Function $p^{vec}$ = getProbaVec(k,n,j)}
\STATE Initiate $p^{vec}$ as a size 1 vector with a value 1.
\FOR {$l=j:1$}
\STATE (At level $l=j-m$) Get $p^{vec}$ of size $m+1$ which represents the probabilities of having $k_l=k,k_l=k+1,...,k_l=k+m$ (calculated in step 6).
\STATE Increment $n$ and for each value of $k_l$ multiply its corresponding probability by $p_{k_l}^n$ and $\bar{p}_{k_l}^n$, which correspond to the values $k_l+1$ and $k_l$, respectively. This yields a vector of probabilities of size $2(m+1)$.
\STATE Merge the vector to obtain $p^{vec}$ with $m+2$ proba. values.
\ENDFOR
\STATE Return $p^{vec}$.
\end{algorithmic}
\end{algorithm}



\subsubsection{The on/off property of the problem}

The following lemma shows that once sensing is stopped, the action communication should always be chosen. 
This is a logical conclusion as the observation values $k$, $n$ remain unchanged, while the available time to collect the gain diminishes.
\vspace{-1mm}
\begin{lemma}
\label{lemma_all_0}
Let $(a_1,...,a_T)$ represent a realization of an optimal policy. If $a_i=0$, then $a_j=0$, for any $j>i$. 
\end{lemma}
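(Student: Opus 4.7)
I would reduce the lemma to the one-step statement: if communication is optimally chosen at state $(k,n,i)$, then it is also chosen at state $(k,n,i+1)$. This suffices because $a_i=0$ leaves the sensing state $(k,n)$ unchanged and only advances the time index; iterating the one-step statement from $i$ through $T$ yields the lemma.

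The proof of the one-step statement is by contrapositive. Introducing the action-values
\begin{align*}
Q_0(k,n,i) &= R(\alpha_k^n,k,n) + V^{\pi^*}(k,n,i+1),\\
Q_1(k,n,i) &= p_k^n\, V^{\pi^*}(k{+}1,n{+}1,i{+}1) + \bar{p}_k^n\, V^{\pi^*}(k,n{+}1,i{+}1),
\end{align*}
I would assume that sensing is optimal at $(k,n,i+1)$, so that $V^{\pi^*}(k,n,i+1) = p_k^n V^{\pi^*}(k{+}1,n{+}1,i{+}2) + \bar{p}_k^n V^{\pi^*}(k,n{+}1,i{+}2)$. Substituting this into $Q_0(k,n,i)$ and lower-bounding $Q_1(k,n,i)$ via the trivial feasibility inequality $V^{\pi^*}(k',n{+}1,i{+}1) \geq R(\alpha_{k'}^{n+1},k',n{+}1) + V^{\pi^*}(k',n{+}1,i{+}2)$ for $k' \in \{k,k{+}1\}$ (communicating once at $(k',n{+}1,i{+}1)$ then playing optimally is a feasible strategy), the value-function terms at time $i{+}2$ cancel in the subtraction and what remains is
\[
Q_1(k,n,i) - Q_0(k,n,i) \;\geq\; g(k,n,1) - R(\alpha_k^n,k,n) \;\geq\; 0,
\]
where the second inequality is Lemma~\ref{lem_increa} with $j=0$. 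Hence sensing is at least weakly optimal at $(k,n,i)$, as required for the contrapositive.

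\textbf{Main obstacle.} Both Lemma~\ref{lem_increa} and the feasibility bound above are only weak, so in degenerate configurations the derivation yields $Q_1(k,n,i) \geq Q_0(k,n,i)$ rather than a strict inequality. The cleanest resolution is to adopt the tie-breaking rule that favors sensing (i.e.\ $d=1$ whenever $Q_1 \geq Q_0$); the derivation then propagates ``$d(k,n,i+1)=1$'' to ``$d(k,n,i)=1$'' and the contrapositive delivers the lemma exactly. An equivalent realization-level viewpoint is the exchange argument: any consecutive $(0,1)$ pattern in a realization can be swapped to $(1,0)$ without loss, since the two-slot expected reward changes from $R(\alpha_k^n,k,n)$ to $g(k,n,1) \geq R(\alpha_k^n,k,n)$ while the state distribution at the subsequent slot is unchanged, so iterated swapping produces a monotone optimal realization.
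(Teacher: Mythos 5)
Your proposal is correct and is essentially the paper's own argument: the paper proves the same one-step claim by contradiction rather than contrapositive, but the chain of inequalities is identical — expand the sensing branch one level, lower-bound the resulting values by the communicate-once $Q$-values, cancel the time-$(i{+}2)$ value terms, and reduce to $g(k,n,1)\geq g(k,n,0)$ from Lemma~\ref{lem_increa}. Your explicit handling of ties is in fact slightly more careful than the paper, which silently assumes the strict inequality $R(\alpha_k^n,k,n)+V(k,n,i) > p_k^n V(k',n',i)+\bar{p}_k^n V(k,n',i)$ follows from $a_{i-1}=0$, i.e.\ implicitly adopts the same sensing-favoring tie-break you propose.
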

\vspace{-1mm}
\begin{proof}
According to Theorem 4.3.3 in \cite{Puterman1994}, the optimal policy is derived by solving the optimality/Bellman equation by determining the state values that satisfy the equation. Actions are then selected based on this optimality equation, which is expressed as follows in our context:
\begin{align}
\label{equ_value}
\begin{split}
&V(k,n,i)= \max \{R(\alpha_k^n, k,n) +  V(k,n,i+1), \\
&p_k^n V(k',n',i+1) + \bar{p}_k^n V(k,n',i+1) \}. 
\end{split}
\end{align}
Let us assume that $a_{i-1}=d(k,n,i-1)=0$. It implies that: 
\begin{align}
\begin{split}
&R(\alpha_k^n, k,n)+ V(k,n,i) > p_k^n V(k',n',i) + \bar{p}_k^n V(k,n',i) \\
&\geq p_k^n Q(k',n',i,a=0) + \bar{p}_k^n Q(k,n',i,a=0) \\
&= E_{k_i}[R(\alpha_k^{n'},k,n')] +  p_k^n V(k',n',i +1) + \bar{p}_k^n V(k,n',i+1),
\end{split}
\end{align}
where $Q(\cdot)$ is the standard $Q$-value and where $E_{k_i}[R(k,n')]\geq  R(\alpha_k^n, k,n)$ as sensing always increases the expected short-term rate (Lemma~\ref{lem_increa}). 
To develop the left-hand term, we assume that $a_{i}=d(k,n,i)=1$ (the opposite of the lemma for $j=i+1$):
\begin{align}
\begin{split}
&R(\alpha_k^n, k,n) + V(k,n,i)  \\
&=R(\alpha_k^n, k,n) + p_k^n V(k',n',i +1) + \bar{p}_k^n V(k,n',i+1) \\
&> E_{k_i}[R(\alpha_k^{n'},k,n')] + p_k^n V(k',n',i +1) + \bar{p}_k^n V(k,n',i+1),  \\
\end{split}
\end{align}
which yields $R(\alpha_k^n, k,n) > E_{k_i}[R(\alpha_k^{n'},k,n')]$. This leads to a contradiction since $E_{k_i}[R(\alpha_k^{n'},k,n')]\geq  R(\alpha_k^n, k,n)$.
Thus, the policy $(a_1,..,a_{i-1}=0,a_i=0, a_{i+1},...)$ is superior to $(a_1,..,a_{i-1}=0,a_i=1, a_{i+1},...)$ given that $a_{i-1}=0$.
Furthermore, by applying the same argument, we can demonstrate that  $(a_1,...,a_{i-1}=0 ,a_i=0, a_{i+1}=0,...)$ is better than $(a_1,..,a_{i-1}=0, a_i=0, a_{i+1}=1,...)$. Applying this reasoning iteratively completes the proof of the lemma. 
\end{proof}

\subsubsection{The optimal open-loop policy}

This on/off property allows us to establish the following lemma. 
We recall that $g(k,n,j)$ can be efficiently computed with Algorithm~\ref{algo_eff}.
\vspace{-2mm}
\begin{lemma}
\label{lemma_onoff}
Let the current time be $i-1$. Consider the following $(T-i)$ open-loop policies: $\pi_1 = (a_i=0,0, ... )$, $\pi_2=(a_i=1,0,0, ... )$,   $\pi_3 = (a_i=1,1,0, ... )$, $\pi_4 = (a_i=1,1,1,0, ... )$, etc.  
The corresponding expected gain for each policy is $g(k,n,0) (T-i)$, ..., $g(k,n,j) (T-i-j)$, ..., respectively. 
If at least one of the above gains $g(k,n,j) (T-i-j) >g(k,n,0) (T-i)=(T-i) R(\alpha_k^n,k,n)$, for $j>0$, then the optimal action at time $i$ is sensing.
 \end{lemma}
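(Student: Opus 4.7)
The plan is to argue by contradiction using the on/off property of Lemma~\ref{lemma_all_0} together with the Bellman principle, which forces $V(k,n,i)$ to upper-bound the expected gain of any admissible open-loop candidate starting from $s(k,n,i)$. Suppose the optimal rule selects communication at time $i$, i.e.\ $d(k,n,i)=0$. By Lemma~\ref{lemma_all_0} every subsequent optimal action is also communication, and the transition \eqref{equ_transi_0} keeps the counters frozen at $(k,n)$, so each of the $T-i$ remaining slots pays the identical reward $R(\alpha_k^n,k,n)=g(k,n,0)$. Hence $V(k,n,i)=(T-i)\,g(k,n,0)$.

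Next I would lower-bound $V(k,n,i)$ by the expected gain of each policy $\pi_{j+1}$ listed in the statement. The first $j$ actions of $\pi_{j+1}$ are sensing and contribute zero. After these $j$ steps the counters become $(k_l,n+j)$ with $k_l\in\{k,\ldots,k+j\}$ distributed according to the probability vector computed by Algorithm~\ref{algo_eff}; the policy then commits to $\alpha_{k_l}^{n+j}$ for the remaining $T-i-j$ communication slots, each paying the conditional expected reward $R(\alpha_{k_l}^{n+j},k_l,n+j)$. Marginalising over $k_l$ via the tower property gives $g(k,n,j)$ per slot and a total of $g(k,n,j)(T-i-j)$. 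Combining the two items, any $j>0$ with $g(k,n,j)(T-i-j)>(T-i)\,g(k,n,0)$ violates $V(k,n,i)\ge g(k,n,j)(T-i-j)$, so the assumption $d(k,n,i)=0$ cannot hold and the optimal action must be sensing.

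The only non-mechanical step is the second one: checking that the expected per-slot reward after the $j$ blind sensing steps is indeed $g(k,n,j)$. This follows from conditioning on the outcome $(k_l,n+j)$, noting that the subsequent communication reward depends only on $(k_l,n+j)$ (and not on which communication slot it is), and recognising the resulting marginal average as the recursion \eqref{expe_rate_sens} unrolled over $j$ levels. All other pieces are short bookkeeping on top of Lemma~\ref{lemma_all_0}.
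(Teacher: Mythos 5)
Your proof is correct and takes essentially the same route as the paper: both rely on Lemma~\ref{lemma_all_0} to pin down the value under ``communicate at time $i$'' as $(T-i)R(\alpha_k^n,k,n)$, and both conclude by noting that the optimal value dominates the value $g(k,n,j)(T-i-j)$ of each open-loop candidate that starts with sensing. The only cosmetic differences are that you phrase the argument as a contradiction rather than directly, and that you explicitly verify via the tower property that the sense-$j$-then-communicate policy is worth $g(k,n,j)(T-i-j)$, a step the paper leaves implicit in the lemma statement.
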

However, it is important to note that the lemma does not specify the optimal action if $(T-i) R(\alpha_k^n,k,n)$ is the largest value. 
The expected gains in Lemma~\ref{lemma_onoff} represent the values of the state $s(k,n,i)$ under the candidate open-loop policies. 
Therefore, the highest gain $g(k,n,j) (T-i-j)$ of the $j$-th open-loop policy can serve as an optimal open-loop policy. 
\begin{corrolary}
\label{ref_coro}
Let the current time be $i-1$. The optimal open-loop policy is determined by selecting the number of sensing steps to $j$, $0\leq j \leq (T-i)$, such that $g(k,n,j) (T-i-j)$ has the highest value.
\end{corrolary}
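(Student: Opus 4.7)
The plan is to reduce the corollary to Lemma~\ref{lemma_onoff} by showing that any optimal open-loop policy must have the on/off structure ($j$ sensing steps followed by $(T-i-j)$ communication steps), after which Lemma~\ref{lemma_onoff} directly furnishes the expected gain $g(k,n,j)(T-i-j)$ and the optimal $j$ is obtained by a one-dimensional maximization. The argument proceeds in two stages: first decomposing the expected gain of an arbitrary open-loop policy as a sum indexed by its communication slots, and then using the monotonicity of $g$ (Lemma~\ref{lem_increa}) to argue that an on/off reordering is weakly optimal.

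In the first stage, I will observe that for any open-loop policy $\pi = (a_i, \ldots, a_T)$ launched from knowledge state $(k,n)$, the quantity $j_s(\pi) = \sum_{r<s} a_r$ (the number of sensing actions strictly before slot $s$) alone determines the distribution of the knowledge state at slot $s$: since a communication action leaves $(k,n)$ unchanged, the state at slot $s$ is stochastically identical to the one obtained after $j_s(\pi)$ consecutive sensings from $(k,n)$. By the very definition of $g$ in~\eqref{expe_rate_sens}, the expected instantaneous reward at a communication slot $s$ under $\pi$ equals $g(k,n,j_s(\pi))$, while sensing slots contribute zero. Hence the total expected gain of $\pi$ admits the clean decomposition $\sum_{s:\, a_s = 0} g(k,n,j_s(\pi))$.

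In the second stage, I will fix the total number of sensings $j = \sum_{s=i}^{T} a_s$ and apply Lemma~\ref{lem_increa} termwise. Because $g(k,n,\cdot)$ is non-decreasing, each summand $g(k,n,j_s(\pi))$ is maximized by making $j_s(\pi)$ as large as possible. The unique way to simultaneously maximize $j_s(\pi)$ at every communication slot is to place all $j$ sensing actions at the very beginning of the policy, yielding exactly the on/off policy $\pi_{j+1}$ from Lemma~\ref{lemma_onoff} with expected gain $g(k,n,j)(T-i-j)$. Iterating this dominance argument over all policies with the same sensing budget shows that the best open-loop policy lies in the family $\{\pi_1, \pi_2, \ldots\}$; choosing $j$ to maximize $g(k,n,j)(T-i-j)$ over $0 \leq j \leq T-i$ then gives the corollary.

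The main obstacle I anticipate is justifying the reward decomposition of the first stage cleanly, in particular confirming that because the link-adaptation constant $\alpha_k^n$ is always computed from the current (post-sensing) observation counts, the per-slot expectation at communication slot $s$ is precisely $g(k,n,j_s(\pi))$ rather than something mixing in the a priori distribution or the joint trajectory of past states. Once this is settled, monotonicity from Lemma~\ref{lem_increa} makes the on/off reduction immediate, and the corollary follows by a direct appeal to Lemma~\ref{lemma_onoff}.
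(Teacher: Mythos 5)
Your proposal is correct, and it takes a more self-contained route than the paper does. The paper never really proves the corollary directly: it proves Lemma~\ref{lemma_onoff} (comparing only the on/off open-loop candidates $\pi_1,\pi_2,\dots$ against the all-communicate value), and the restriction of the search to on/off policies is justified only implicitly, by appeal to the on/off property of Lemma~\ref{lemma_all_0} --- which is stated and proved for a realization of the \emph{optimal closed-loop} policy via the Bellman equation, not for arbitrary open-loop sequences. You instead close that gap head-on: you decompose the expected gain of an arbitrary open-loop policy as $\sum_{s:\,a_s=0} g(k,n,j_s(\pi))$, using the fact that the communication action is a deterministic self-loop (Eq.~\eqref{equ_transi_0}) so the knowledge-state distribution at slot $s$ depends only on the number of prior sensings $j_s(\pi)$, and then apply the monotonicity of $g$ (Lemma~\ref{lem_increa}) termwise to show that, for a fixed sensing budget $j$, every summand is bounded by $g(k,n,j)$ and the bound is attained by front-loading the sensing. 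This rearrangement argument needs only Lemma~\ref{lem_increa}, avoids the closed-loop Bellman machinery entirely, and makes explicit why the maximization can be restricted to the one-parameter family $g(k,n,j)(T-i-j)$; the paper's version is shorter but leans on an on/off lemma whose stated scope does not literally cover open-loop policies. One cosmetic caveat: you do not need the \emph{uniqueness} of the front-loaded arrangement, only that it attains the termwise upper bound; and you inherit the paper's reliance on Lemma~\ref{lem_increa} for general $j$, whose proof in the paper is itself only sketched beyond $j=0$.
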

\begin{proof} (of Lemma~\ref{lemma_onoff})
The equation for $V(k,n,i)$ is given by \eqref{equ_value}.
If $d(k,n,i)=0$, the maximum of \eqref{equ_value} is obtained from the first term in \eqref{equ_value}. By applying Lemma~\ref{lemma_all_0}, we have $Q(k,n,i,a=0) = (T- i) R(\alpha_k^n,k,n)$.
Therefore, if at least one policy $\pi$ exists with $d(k,n,i)=1$, that produces a value $V^{\pi}(k,n,i)>(T- i) R(\alpha_k^n,k,n)$, then the optimal action is $d(k,n,i)=1$. 
Consequently, if for any $j \geq 0$, $g(k,n,j) (T-i-j) \geq (T- i) R(\alpha_k^n,k,n)$, then we find: 
\begin{align}
V(k,n,i) \geq  g(k,n,j)(T-i-j) \geq   (T- i) R(\alpha_k^n, k,n),
\end{align}
where $V(k,n,i)$ is obtained with $d(k,n,i)=1$.
\end{proof}


\subsubsection{The optimal closed-loop policy}

The value of the states under the optimal closed-loop policy can also be computed efficiently using an algorithm of similar complexity.
The key idea is to introduce a pre-processing part before the standard BI algorithm, allowing the algorithm to be executed only on the subset of relevant states.
Let $f(a,b,c)=\max\{a+b,c\}$.
We can re-express $V(k,n,i)$, as given by \eqref{equ_value}, as follows: 
\begin{align}
\begin{split}
&V(k,n,i) = f(p_k^n a_i,\bar{p}_k^n b_i, c_i), 
\end{split}
\end{align}
where $a_i=V(k',n',i), \ b_i=V(k,n',i)$, and $c_i=(T-i) R(\alpha_k^n,k,n)$.
The terms $a_i$ and $b_i$ are unknown and therefore must be further developed.
We let $i'=i+1$.
\begin{align}
\begin{split}
&a_i=f(p_{k'}^{n'} a_{i'}, \bar{p}_{k'}^{n'} b_{i'},c_{i'}), \ b_i=f(p_k^{n'}d_{i'},\bar{p}_k^{n'} e_{i'},f_{i'}).
\end{split}
\end{align}
Similarly, $c_{i'}$ and $f_{i'}$ can be directly computed using Lemma~\ref{lemma_all_0}.
The terms $a_{i'}$, $b_{i'}$, $d_{i'}$, and $e_{i'}$ must be further expanded.
However, $b_{i'} = d_{i'} = V(k',n'',i+2)$ similarly to the observation made in \eqref{equ_simpli}.
Therefore, only three terms require further expansion: $a_{i'}, b_{i'}, e_{i'}$.
The connections between these two levels can be stored as a vector $[1, \ 2, \ 2, \ 3]$ where the value at position $j$ indicates the index of the next-level state to be used.
More generally, at level $i+j$ there are $j$ ``fixed terms" to store and the connection vector has the form $[1, \ 2, \ 2, \ 3, \ 3 , ... , j+1]$.
Once level $i+j=T$ is reached, the algorithm proceeds in the other direction using the standard BI algorithm starting from the obtained terminal states.

Algorithm~\ref{alg:algo_eff_2} summarizes the two parts.
Regarding the complexity, similarly to Algorithm~\ref{algo_eff}, there are $O(j)$ operations to perform per level.
Thus, the overall complexity is $O(\sum_{l=i}^T l ) = O(T^2 - i^2)$.

\begin{algorithm}
\caption{Computing the value of $V(k,n,i)$ under the optimal closed-loop policy}\label{alg:algo_eff_2}
\begin{algorithmic}[1]
\label{algo_value_states}
\STATE Begin with the current $k$,$n$ values.
\STATE \textbf{First part} (pre-processing)
\FOR{$l=i:T$}
\STATE (At level $l=i+j$) Store the $j+1$ fixed terms, each computed as $(T-l) R(\alpha_k^n, k,n)$.
\STATE Store the connections between this level and the next one. 
\ENDFOR
\STATE \textbf{Second part} (backward induction)
\FOR{$l=T:i$}
\STATE (At level $l=i+j$) Compute the values of the $j$ states at this level using:
\begin{itemize}
\item The state values and the connection vector from the previous level  $i+j+1$.
\item The fixed terms computed in the first part. 
\end{itemize}
\ENDFOR
\end{algorithmic}
\end{algorithm}



\vspace{-2mm}
\section{Simulation results}
\label{Sec_simu_res}

Figure \ref{fig:open_loop_poly_T2000} illustrates the values of the state $s(k=0,n=0,i=0)$ under the following policies, with and without superposition coding.
\begin{itemize}
\item The optimal open-loop policy computed with Algorithm~\ref{algo_eff}: For the $j$-th open-loop policy, the value is computed as $g(k,n,j) (T-i-j)$. As noted in Corollary~\ref{ref_coro}, the highest value determines the optimal number of open-loop sensing steps. 
\item The optimal closed-loop policy computed with Algorithm~\ref{algo_value_states}: It is presented for several maximum number of allowed sensing steps. 
\item The greedy bandit benchmark algorithm with $\epsilon^*=0.01$.
\item Known $p$: In this case the optimal policy is to always communicate. The expected sum rate is $T\cdot  \int_0^1 p(p) E_p[\alpha^*,p]dp$, where $p(p)$ follows the uniform distribution and $\alpha^*$ is computed using the known value $p$. 
\end{itemize}
The values of the simulation parameters are as follows: $T=1000$, $\mathrm{SINR}_1=10$ dB, $\mathrm{SINR}_2=15$ dB.

Note that the value of the state $s(k=0,n=0,i=0)$ represents the expected sum communication data rate at the end of the communication period. 
Hence, it directly provides the average sum data rate obtained when simulating the system across various interference probability values.

We begin by analyzing the coding schemes. We observe that superposition coding results in a 3 $\%$ to 4 $\%$ data rate increase compared to link adaptation without superposition. 
Additionally, the optimal value of the state with the closed-loop policy is closely approximated by the best open-loop policy. 
Furthermore, the proposed open-loop policy outperforms the benchmark greedy policy by 2-3 $\%$.
 
Note however that for ``high" values of $p$, the link adaptation becomes trivial (i.e., $R_{min}$ is to be chosen).
Therefore, we focus on evaluating performance for more challenging values of $p$ restricted for instance to the range $p \in [0.05,0.3]$.
Table~\ref{table_perf_norma} presents the results with the same parameters, with no superposition coding, and with $T=500$. 
As expected, the performance gap between the policies widens. A 3 $\%$ to 4 $\%$ expected sum data rate difference is observed. 


\begin{figure}
    \centering
    \includegraphics[scale=0.6]{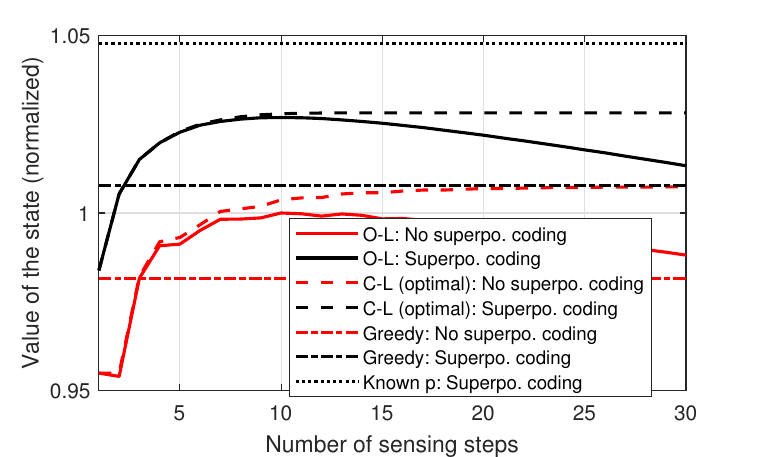}
\vspace{-1mm}
    \caption{Value of the state $s(k=0,n=0,i=0)$ under several policies and with $T=1000$. For the closed-loop policy the x-axis (number of sensing steps) represents the maximum number of allowed sensing steps. O-L refers to the open-loop policy while C-L refers to the closed-loop policy.}
    \label{fig:open_loop_poly_T2000}
\vspace{-2mm}
\end{figure}
\begin{table}
\begin{center}
\begin{tabular}{||c | c | c ||} 
 \hline
 Greedy ($\epsilon^*=0.03$)& Open-loop & Closed-loop   \\ [0.5ex] 
 \hline\hline
 1 & 1.04 & 1.07 \\ 
 \hline
\end{tabular}
\caption{Normalized expected sum communication data rate with $p \in [0.05,0.3]$, $T=500$, and no superposition coding.}
\label{table_perf_norma}
\end{center}
\vspace{-8mm}
\end{table}


Finally, Figure~\ref{fig:System_eval} illustrates an example of the evolution of a system over time with $T=1000$. 
The first subplot depicts the expected gain of each visited state over time, the second subplot displays the chosen action, the third subplot presents the current sum data rate, and the last subplot shows the current interference-state realization. We observe that the closed-loop algorithm transitions from sensing to communication at time $i=27$, and the coding rate $R_{max}$ is selected (example without superposition coding). Note that for more extreme values of $p$ (close to 0 or 1), which polarize $k$ with high probability, the sensing period is significantly shortened, typically to fewer than 5 time slots. 
One possible interpretation is that, for balanced value of $p$, a higher reliability in the estimate is necessary due to the increased potential for data rate loss associated with suboptimal link-adaptation choices.
\vspace{-1mm}
\begin{figure}
    \centering
    \includegraphics[scale=0.71]{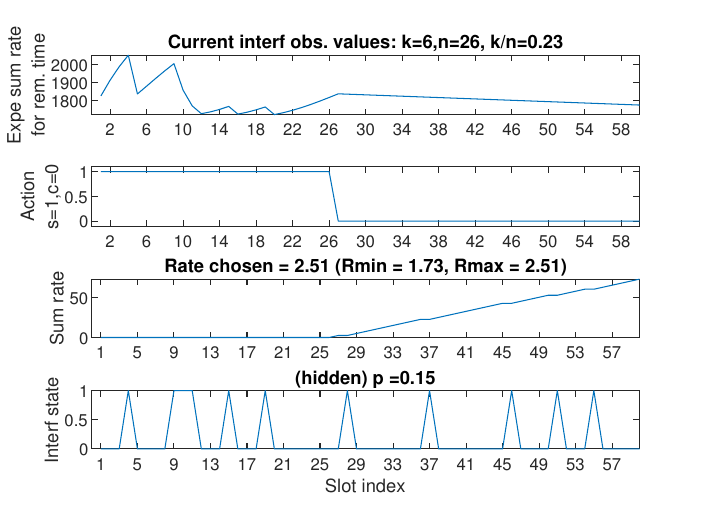}
    \caption{Example of evolution of a system over time with $T=1000$. Only the first 60 slots are displayed.}
    \label{fig:System_eval}
\vspace{-6mm}
\end{figure}

\section{Conclusions}

In this paper, we addressed a decision-making problem where the communication system must choose between communication and sensing. Sensing improves the accuracy of link adaptation, but comes at the cost of missing communication opportunities. 
We formulated this problem as a MDP and demonstrated that optimal policies, both in open and closed-loop configurations, can be derived with a low computational complexity. 
Simulation results show that the optimal closed-loop policy can be effectively approximated by the optimal open-loop policy. 
\vspace{-1mm}

\end{document}